\theoremstyle{plain}
\theoremstyle{plain}
\def\bea{\begin{eqnarray}}
\def\eea{\end{eqnarray}}
\def\ba{\begin{array}}
\def\ea{\end{array}}
\def\beq{\begin{equation}}
\def\eeq{\end{equation}}
\def\Tr{\operatorname{Tr}}
\def\({\left(}
\def\){\right)}
\def\[{\left[}
\def\]{\right]}
\newtheorem{example}{Example}
\newtheorem{lemma}{Theorem}
\begin{document}
\title{Designing a Universal Quantum Switch for Arbitrary Quantum Dynamics}

\author{Priya Ghosh}
\affiliation{Harish-Chandra Research Institute,  A CI of Homi Bhabha National
Institute, Chhatnag Road, Jhunsi, Prayagraj - 211019, India}
\author{Kornikar Sen}
\affiliation{Departamento de Física Teórica, Universidad Complutense, 28040 Madrid, Spain}
\author{Ujjwal Sen}
\affiliation{Harish-Chandra Research Institute,  A CI of Homi Bhabha National
Institute, Chhatnag Road, Jhunsi, Prayagraj - 211019, India}

\begin{abstract}
{A quantum switch is a superoperator that, in general, creates a superposition of various causal orders of two or more  quantum dynamics that are all divisible in the complete positivity (CP) sense. We introduce a process that we term as the universal quantum switch (UQS), which unlike conventional quantum switches, allows for the construction of a quantum switch that can superpose different causal orders of any set of quantum dynamics, regardless of their CP-divisibility.} {Our approach also enables the construction of a quantum switch while considering a single environment connected with the system, in contrast to the traditional one.}
Moreover, we show the UQS provides more advantages in performance for a certain state discrimination task compared to traditional quantum switches. The next question that we address is the following: What is the CP-divisibility characteristic of a dynamics built by acting a quantum switch on CP-divisible or -indivisible dynamics? In this regard, an example is presented where the dynamics created by the action of the UQS on two CP-indivisible dynamics is CP-indivisible. Additionally, we prove a necessary and sufficient condition for the channel created by acting  the traditional quantum switch on two CP-divisible dynamics to be CP-divisible. Furthermore, we present some examples of CP-divisible dynamics on which, when the usual quantum switch is operated, the resulting dynamics not only becomes CP-indivisible but also turns into P-indivisible. Our findings demonstrate that quantum switches can build CP-divisible, CP-indivisible, and even P-indivisible dynamics from CP-divisible dynamics, underscoring the versatility of this technique. 
\end{abstract}

\maketitle

\section{Introduction}
\label{sec1} 
Quantum evolutions of 
isolated systems  can be described using unitary maps. However, when a system interacts with an external entity, commonly referred to as the environment, the evolution of the system over time cannot be expressed solely through a unitary transformation. In such scenarios, the system is considered to be open to the environment \cite{Rivas-open,RHP-review,BLPV-review, VA-review}.
If the system and the environment are initially in a product state, the evolution of the system from its initial state to any subsequent state can be elucidated using a linear completely positive trace-preserving map (CPTP). {Interestingly, the converse is also true, i.e., any CPTP evolution of a system can always be explained by considering the presence of an environment and the composite dynamics of the system-environment set-up to be unitary.}

CPTP dynamics of systems can mainly be classified in two ways: (i) CP-divisible and CP-indivisible, and (ii) P-divisible and P-indivisible. 
CP- and P-indivisibility find extensive applications in a large variety of domains, ranging from quantum information theory \cite{qkd-nm,nonlocality-nm,rivu-1,majumdar,rivu-2}, quantum computation \cite{chen}, to quantum thermodynamics~\cite{thermo-nm-100,ahana-refrigerator,srijon-battery-nm,ahana-heat-current,nm-thermo-95,kornikar-2nd-battery,tang}.
These distinct classes of dynamics, along with their respective quantifiers have been studied in detail in recent years~\cite{RHP-measure,BLP-measure}. However, a direct relation between the quantifiers of CP- and P-divisibility is not yet known. 

An intriguing phenomenon offered by the striking characteristics of quantum mechanics is the chance to superpose different causal orders of two or more dynamics, which can result in an indefinite causal order~\cite{ICO-1,ICO-2,ICO-3,ICO-4,ICO-5}. A popular method for the implementation of indefinite causal orders is known as the quantum switch~\cite{switch-first-paper,tang}. A quantum switch is basically a supermap that takes a set of dynamics and produces a new dynamics that represents the indefinite causal order of the given set of dynamics.
The application of indefinite causal order of channels on systems has been found to provide an advantage in various quantum mechanical tasks, viz. channel discrimination~\cite{channel-discrimination-1,switch-channel-discrimination-2}, nonlocal games~\cite{ICO-1}, quantum communication~\cite{communication-switch-3,communication-switch-1,pati-switch,communication-switch-2,communication-ICO-51}, metrology~\cite{metrology-ICO-91,metrology-switch-1,metrology-switch-2,metrology-switch-3,metrology-switch-4,QFI-ICO-102,QFI-ICO-103}, open system dynamics~\cite{open-switch-1,open-switch-2,open-switch-3,open-switch-4,samyadeb-switch}, state discrimination~\cite{switch-task-1,switch-state-discrimination}, etc. Quantum switches have been produced in various experiments using photonic set-ups~\cite{expt-switch-1,expt-switch-2,expt-switch-3,expt-switch-4,expt-switch-5}.

{To study the action of a quantum switch, one needs to know the Kraus decomposition of each of the channels fed into the switch for all time intervals, even at the intermediate times of the evolution of the system on which the maps act~\cite{switch-first-paper}. {Thus, all the dynamics that are being fed into the quantum switch must be CP-divisible.}
In our work, we introduce a quantum switch model that also constructs the indefinite causal order of two or more quantum dynamics. But in this case, to build the model, it's not necessary to have the Kraus decomposition of the maps, which implies that the introduced quantum switch can act on any set of CPTP dynamics, even on NCPTP dynamics.
Therefore, we name it the ``universal quantum switch" (UQS). We refer to the previous model of the quantum switch as the conventional quantum switch (CQS). The UQS creates the superposition of different causal orders of dynamics in a distinct way. Considering a state discrimination task, we demonstrate that the UQS can provide an advantage over the CQS.

The next goal of this article is to investigate the CP-divisible characteristic of the indefinite causal order of dynamics. In this regard, we first prepare an indefinite causal order of a pair of CP-indivisible dynamics and show that the produced dynamics is also CP-indivisible. 
{Since the CQS can not act on CP-indivisible dynamics, in this case,} we used the UQS to build the indefinite causal order of the CP-indivisible dynamics.

Furthermore, we address the following question:
Does CP-divisibility remain preserved under the action of a quantum switch? In particular, if we compose a dynamics by applying a quantum switch to two arbitrary CP-divisible dynamics, will the resulting dynamics also be CP-divisible? By using the CQS, we find that the necessary and sufficient condition for the final dynamics to be CP-divisible is that each of the Kraus operators of one of the initial CP-divisible dynamics should commute with the same of the other dynamics.
Finally, by exploring a few examples, we show the dynamics, constructed by applying the CQS to a pair of CP-divisible dynamics, can even be P-indivisible.
These results reveal that the quantum switch acting upon CP-divisible dynamics is capable of creating CP-divisible, CP-indivisible, and even P-indivisible dynamics, demonstrating the applicability of this approach.

The rest of the paper is organized as follows: In Sec. \ref{sec2}, we briefly recapitulate CPTP, CP-divisible, CP-indivisible, P-divisible, and P-indivisible quantum dynamics. We also include a discussion on the detection of P-indivisibility, conventional quantum switches, and different typical CP-divisible channels, e.g., phase damping, depolarizing, and amplitude damping noise, in the same section. In Sec. \ref{sec3}, we introduce the universal quantum switch, which can act on any two or more quantum dynamics even if the dynamics are CP-indivisible. A state discrimination task in which the universal quantum switch can outperform the typical quantum switch in a range of instances is discussed in Sec. \ref{sdt}. In Sec. \ref{sec4}, we discuss an example of a pair of CP-indivisible dynamics on which, when the universal quantum switch acts, the resulting dynamics remains CP-indivisible.
We derive a necessary and sufficient criterion for the action of the conventional quantum switch on two arbitrary CP-divisible dynamics to produce a CP-divisible dynamics in Sec. \ref{sec5}. Examples of pairs of CP-divisible dynamics that lose their CP-divisibility when applied in indefinite causal order modeled by the CQS are also presented in the same section, i.e., Sec. \ref{sec5}. Finally, we present our concluding remarks in Sec. \ref{sec6}.

\section{Preliminaries}
\label{sec2}
While the postulates of quantum mechanics were initially developed for isolated quantum systems, in reality, systems often involve unavoidable interactions with their surroundings.  In such cases, the system is considered open and susceptible to external influences. Consequently, the system's dynamics typically deviate from unitarity, even though the combined dynamics of the system and its environment, with which it interacts, remain unitary. In this section, we will briefly recapitulate the evolution of these open systems. 
\vspace{1mm}\\
\textbf{Completely-positive trace-preserving maps:} 
Let us denote the Hilbert spaces describing the states of the quantum system and environment by $\mathcal{H}_S$ and $\mathcal{H}_E$, respectively. Moreover, let the state of the system at time $t=t_1$ and the map describing the system's evolution between, say time $t=t_1$ to $t=t_2\geq t_1$, be $\rho_S (t_1)$ and $\Lambda(t_2,t_1)$, respectively. Hence we can write $\rho_S (t_2)=\Lambda(t_2,t_1)\rho_S (t_1)$.  It is well known that if initially, i.e., at $t=0$, the environment and the system are in a product state, then the evolution of the system with time due to its interaction with the environment can be tracked using completely positive trace-preserving (CPTP) maps~\cite{chuang}. Since any map is CPTP, if and only if, it has Kraus operator decomposition~\cite{chuang}, $\rho_S (t)$ can be expressed in terms of the Kraus operator decomposition of the map, $\Lambda(t,0)$, acting on $\rho_S(0)$, i.e.,
$\rho_S (t) =\Lambda(t,0) \rho_S(0)=\sum_i K_i(t,0) \rho_S(0) K_i^\dagger (t,0)$. Here $\{K_i (t,0) \}_i$ represents the set of Kraus operators of the map $\Lambda (t,0)$ and it
satisfies $\sum_i K_i^\dagger (t,0) K_i (t,0) = \mathbbm{I}$, where $\mathbbm{I}$ denotes the identity operator acting on $\mathcal{H}_S$. From now on, we will denote the set of maps, $\{\Lambda(t,0)\}_t$, that describe the dynamics of the system from an initial time as $\Lambda$. We will always consider the maps, $\Lambda(t,0)$, to be CPTP. However, the maps $\Lambda(t_2,t_1)$, for $t_1\neq 0$,
 which represents the evolution of the system in an intermediate time may not be CPTP. If the map, $\Lambda(t,0)$, is CPTP for all $t$, we call the dynamics, $\Lambda$, as CPTP dynamics or CPTP evolution.

The CPTP dynamics, $\Lambda$, of systems can be classified in the following two ways ~\cite{RHP-review,BLPV-review,VA-review,Rivas-open}:
\begin{enumerate}
    \item  {CP-divisible and CP-indivisible quantum processes}: Any quantum dynamics, say $\Lambda$, describing the time evolution of the state, $\rho_S$, of the system is said to be CP-divisible if all the elements of $\Lambda$ satisfy the following relation:
\begin{align}
\label{eq-divisible}
\Lambda(t_2, 0) = \Lambda(t_2, t_1) \cdot \Lambda(t_1, 0)
\end{align}
for all $t_2 \geq t_1 \geq 0$, where the map, $\Lambda(t_2,t_1)$, is CPTP. 
Otherwise, it is referred to as a CP-indivisible quantum dynamics. To quantify the CP-divisibility of quantum dynamics, the Rivas-Huelga-Plenio measure can be utilized~\cite{RHP-measure}.
\item
 P-divisible and P-indivisible quantum operations~\cite{RHP-review,BLPV-review,VA-review,Rivas-open}: A quantum map, $\Lambda(t_2,0)$, is called P-divisible if the dynamics satisfies Eq.~\eqref{eq-divisible}, where $\Lambda(t_2,t_1)$ is positive and trace-preserving for all $t_2\geq t_1\geq 0$. If a dynamics is not P-divisible, it is called a P-indivisible quantum dynamics. P-indivisibility of dynamics can be measured using the Breuer-Laine-Pillo (BLP) measure~\cite{BLP-measure}. 
\end{enumerate}

Some typical examples of qubit CP-divisible channels are discussed below:
\begin{itemize}
    \item Ideal phase damping channel: The ideal phase damping channels are classic examples of CP-divisible dynamics. An ideal phase damping channel, $\Lambda_{{PDC}}(t,0)$, can be expressed using the following Kraus operators: $K_0(t,0)=e^{-\Gamma_{PDC}t/2} \mathbbm{I}_2$, $K_1(t,0)=\sqrt{1-e^{-\Gamma_{PDC}t}}\sigma_z$, where $ \Gamma_{PDC}$, $\mathbb{I}_2$, and $\sigma_z$ represent the Lindblad coefficient of the channel, identity operator acting on qubit Hilbert space, and the Pauli-$z$ matrix, respectively. The action of the channel, $\Lambda_{{PDC}}(t,0)$, on any single-qubit state, $\rho_S$, can be expressed as 
\begin{equation*}
 \Lambda_{{PDC}} (t,0)\rho_S = e^{-\Gamma_{PDC}t} \rho_S + (1-e^{-\Gamma_{PDC}t}) \sigma_z \rho_S \sigma_z.
\end{equation*}
\item  Ideal depolarizing channel: {The ideal depolarizing maps increase the mixedness of all quantum states in the same ratio}. It is another example of a CP-divisible channel.
The transformation of any state, $\rho_S$, under the action of an ideal depolarizing channel, $\Lambda_{{DC}} (t,0)$, having Lindblad coefficient, $\Gamma_{DC}$, is given by $\Lambda_{{DC}}(t,0) \rho_S = {\left(1+3e^{-\Gamma_{DC}t}\right)} \rho/{4} +\left(1-e^{-\Gamma_{DC}t}\right) (\sigma_x \rho \sigma_x + \sigma_y \rho \sigma_y + \sigma_z \rho \sigma_z)/4,$ where $\sigma_x$, $\sigma_y$, and $\sigma_z$ are the Pauli matrices. The Kraus operators of this noise are 
$K_0(t,0) = \frac{\sqrt{\left(1+3e^{-\Gamma_{DC}t}\right)}}{2} \mathbbm{I}_2$, $K_1(t,0) = \frac{\sqrt{\left(1-e^{-\Gamma_{DC}t}\right)}}{2} \sigma_x$, $K_1(t,0) = \frac{\sqrt{\left(1-e^{-\Gamma_{DC}t}\right)}}{2} \sigma_y$, and $K_1(t,0) = \frac{\sqrt{\left(1-e^{-\Gamma_{DC}t}\right)}}{2} \sigma_z$

\item Ideal amplitude damping channel: The last example of CP-divisible dynamics that we present here is the ideal amplitude damping channel. This channel models the decay of an excited state of a qubit as a result of the spontaneous emission of a photon.
The Kraus operators of the ideal amplitude damping channel ($\Lambda_{{ADC}} (t,0)$) are given by
\begin{eqnarray}
K_0(t,0) &=& \left( \begin{array}{cc}
1 & 0 \\
0 & e^{-\Gamma_{ADC}t/2} \end{array} \right),\label{extra1} \\
K_1(t,0) &=& \left( \begin{array}{cc}
0 & \sqrt{1-e^{-\Gamma_{ADC}t}} \\
0 & 0 \end{array} \right),   \label{adc}  
\end{eqnarray} where $\Gamma_{ADC}$ denotes the Lindblad coefficient of the channel.
\end{itemize}
The parameters, $1-e^{-\Gamma_{PDC}t}$, $1-e^{-\Gamma_{DC}t}$, and $1-e^{-\Gamma_{ADC}t}$ are also known as the noise strengths of phase damping, depolarizing, and amplitude damping channels, and can be denoted as $p_{PDC}$, $p_{DC}$, and $p_{ADC}$, respectively.
The noise strengths of all of the three above-mentioned channels vary from 0 to 1.

One can notice simply from the definitions of P- and CP-divisible quantum dynamics that all CP-divisible quantum dynamics will also be P-divisible; however, the reverse is not true. Therefore, we can split up all of the quantum dynamics into two groups: P-divisible and P-indivisible. Among the set of P-divisible dynamics, we can find dynamics that are also CP-divisible. To provide a visual idea of these categories, in Fig. \ref{fig1}, we draw a schematic diagram describing the different classes of quantum dynamics discussed above. In the figure, the red solid zigzag line differentiates P-divisible dynamics from P-indivisible ones, and the black portion shows CP-divisible dynamics. The region outside of the black portion represents CP-indivisible dynamics. In the next part of the paper, we offer an example where the dynamics found by operating the UQS on two CP-indivisible dynamics is CP-indivisible. Moreover, we state a necessary and sufficient condition for the indefinite causal order of two CP-divisible dynamics, modeled by the CQS, to be CP-divisible. Also, a few examples of pairs of CP-divisible dynamics are presented that become P-indivisible when acted on in indefinite causal order, again modeled by the CQS. These results are symbolically depicted using the purple arrows in Fig. \ref{fig1}. There can be other examples also for which, say the action of the switch may map P-divisible maps to  

Before going into the details of the results, let us first discuss the tools we need to arrive at them.

\begin{figure}[h!]
\includegraphics[scale=0.5]{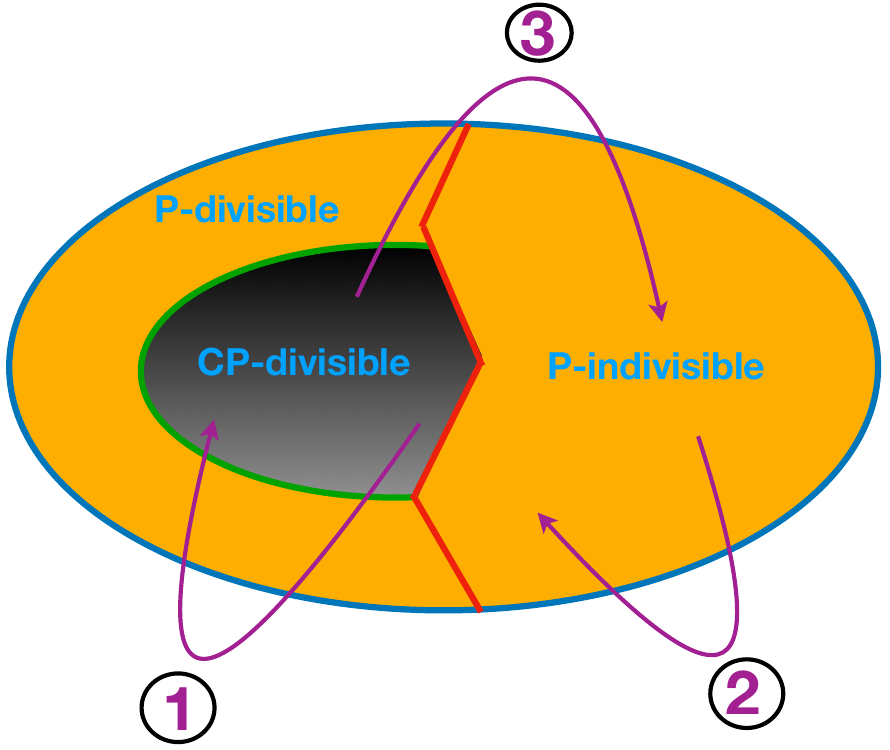}
\caption{\textbf{Schematic representation of different categories of quantum dynamics.} The set of all completely positive trace-preserving quantum dynamics is convex and is shown by a blue-colored spheroid filled with yellow. The left and right sides of the spheroid, divided using the red zigzag curve, indicate the sets of P-divisible and P-indivisible dynamics, respectively. In contrast, the green-colored curve helps to encompass the set of all CP-divisible quantum dynamics and is marked by a black-shaded region. The unshaped yellow region, therefore, represents the set of all CP-indivisible quantum dynamics. We have found examples where the action of the quantum switch maps 1) a pair of CP-divisible maps to CP-divisible maps, 2) CP-divisible maps to P-indivisible maps, and 3) P-indivisible maps to itself. These three examples are depicted in the diagram using the violet arrows and corresponding numbers. }
\label{fig1} 
\end{figure}

We begin by providing a discussion on the detection of the P-indivisibility of quantum dynamics. \vspace{1mm}\\
\textbf{Detection of P-indivisibility~\cite{BLP-measure}:} When two different states of a system are affected by any fixed P-divisible dynamics, the distinguishability between the states either decreases or remains the same but never increases with time. Utilizing this characteristic of P-divisible dynamics, the BLP measure was introduced~\cite{BLP-measure}, which can quantify the P-indivisibility of quantum dynamics. 
In this paper, we will not exactly measure the P-indivisibility of any dynamics. Our aim will always be to detect P-indivisibility of quantum dynamics. As we just mentioned, if a quantum dynamics, say $\Lambda^P $, of a particular system is P-divisible, then the distinguishability between the states, $\Lambda^P(t,0)\rho^1_S$ and $\Lambda^P(t,0) \rho^2_S$, will monotonically decrease with time, $t$, for any two initial states, $\rho^1_S$ and $\rho^2_S$, of the system. Hence the trace distance, $D(\Lambda^P(t,0)\rho^1_S,\Lambda^P(t,0) \rho^2_S)$, between the states, $\Lambda^P(t,0)\rho^1_S$ and $\Lambda^P(t,0)\rho^2_S$, 
 will never increase with time, $t$. In other words, P-divisible quantum dynamics, $\Lambda^P$, imply
\begin{align}
D(\Lambda^P(t_2,0) \rho^1_S,\Lambda^P(t_2,0) \rho^2_S)\nonumber\\ \leq D(\Lambda^P(t_1,0) \rho^1_S,\Lambda^P(t_1,0) \rho^2_S), \label{eq-BLP}
\end{align}
for any $t_2\geq t_1 \geq 0$ and any pair of states, $\rho_S^1$ and $\rho_S^2$.
Hence if a quantum dynamics violate the above inequality for any pair of times, $\{t_1,t_2\}$, and states, $\{\rho^1_S,\rho^2_S\}$, then the quantum dynamics is certainly P-indivisible.

For completeness, let us briefly define the trace distance.
The trace distance between two states, $\rho^1_S$ and $\rho^2_S$, acting on the same Hilbert space, can be expressed as
\begin{equation*}
D(\rho^1_S, \rho^2_S) \coloneqq \frac{1}{2} ||\rho^1_S - \rho^2_S||,
\end{equation*}
where $||A||$, for any matrix $A$, denotes $\tr(\sqrt{A^\dagger A})$.

In the next part, which is the last part of this section, we will discuss the conventional quantum switches.\vspace{1mm}\\
\textbf{Conventional Quantum Switches \cite{switch-first-paper}}: Quantum switches are supermaps which take two or more quantum dynamics and create a superposition of different causal orders of them. We will denote the action of a quantum switch as $\mathcal{S}$. As the simplest non-trivial case, we consider quantum switches that are able to superpose causal orders of two quantum dynamics, say $\Lambda_1$ and $\Lambda_2$. Let both of these channels act on a system which is initially prepared in the state $\rho_S$. Typically a control system is required in order to build a quantum switch. Since we focus on the superposition of causal orders of two quantum dynamics, the corresponding control system needs to have at least dimension two. 
Let us assume that when the control is in the state $\ket{0}$, the quantum map, $\Lambda_1(t_1,0)$, acts on the state of the system, {followed by} the operation of the quantum map $\Lambda_2(t_2,t_1)$, and the channels act in the opposite order, i.e., the composite map, $\Lambda_1(t_2,t_1)\Lambda_2(t_1,0)$, acts on the state of the system when the control is in the state $\ket{1}$. Hence the state of the system first evolves for time $0$ to $t_1$ and then $t_1$ to $t_2$. These are the two possible causal orders of the pair of dynamics. 
Here $\ket{0}$ and $\ket{1}$ are elements of the two-dimensional computational basis of the Hilbert space, $\mathcal{H}_C$, which describes the control qubit. We consider $\{\ket{0},\ket{1}\}$ to be the eigenbasis of $\sigma_z$.
Let us consider the initial states of the control qubit and the system to be $\omega_C$ and $\rho_S$ respectively. Then the action of the dynamics, created by applying a quantum switch, $\mathcal{S}$, on the two quantum dynamics, $\Lambda_1$ and $\Lambda_2$, will transform the state, $\rho_S$, as 
\begin{eqnarray}
&&\rho'_S= {}_C \bra{\pm}\mathcal{S}(\Lambda_1, \Lambda_2) (\rho_S \otimes \omega_C)\ket{\pm}_C/N,\nonumber \\ &&\text{where }
\mathcal{S}(\Lambda_1, \Lambda_2) (\rho_S \otimes \omega_C) \coloneqq \sum_{ij} W_{ij} (\rho_S \otimes \omega_C) W_{ij}^{\dagger},\nonumber\\
&&\text{and } W_{ij} \coloneqq K_i^{(1)}(t_2,t_1) K_j^{(2)}(t_1,0) \otimes \ket{0} \bra{0} \nonumber\\&&+ K_j^{(2)}(t_2,t_1) K_i^{(1)}(t_1,0) \otimes \ket{1} \bra{1}, \nonumber\\ \label{eq1}
\end{eqnarray}
where $K_i^{(l)}(t_2,t_1)$ and $K_i^{(l)}(t_1,0)$ are Kraus operators of the maps $\Lambda_l(t_2,t_1)$ and $\Lambda_l(t_1,0)$, respectively and $N$ is the normalization coefficient. Here $l$ takes values $1$ and $2$ representing the two dynamics, $\Lambda_1$ and $\Lambda_2$, and $\ket{\pm}_C=(\ket{0}\pm\ket{1})/\sqrt{2}$ is element of $\mathcal{H}_C$. Since the map, $\mathcal{S}(\Lambda_1, \Lambda_2)$, involves the action of two consecutive maps, the total time for which the system evolves under the entire process is $0$ to $t_2$. We will denote the map which transfers $\rho_S$ to $\rho'_S$, constructed using the quantum switch, $\mathcal{S}$, as $\Sigma(\Lambda_1,\Lambda_2,t_2,0,\cdot)$. The action of the map on a state, $\rho_S$, will be denoted as $\Sigma(\Lambda_1,\Lambda_2,t_2,0,\rho_S )$. 
Though the exact form of the map also depends on the state of the control qubit, $\omega_C$, and the parameter, $t_1$, we are not explicitly including them in the notation, $\Sigma(\Lambda_1,\Lambda_2,t_2,0,\cdot)$, for the sake of simplicity. The fourth argument of the function, $\Sigma$, denotes the time of the application of the map, $\mathcal{S}(\Lambda_1,\Lambda_2)$, on the system-control state, which in this scenario is considered to be 0.

One can generalize the definition of a quantum switch acting on two quantum dynamics to define quantum switches which act on $N$ quantum dynamics by considering a control system of dimension $N!$.

We will present another model of a quantum switch in the next section, named the universal quantum switch. The motivation behind the introduction of the quantum switch will also be discussed. 

\section{Universal Quantum Switches}
\label{sec3}
Before going into the details of the construction of a universal quantum switch, let us first discuss the need to introduce a new quantum switch.

Let us consider a situation where a system, $S$, interacts with its environment, $E$, for time $0$ to $t_2$. Moreover, consider the interaction to be of two possible types, and the evolution of $SE$ due to these two types of interactions be either described by a unitary, $U_1(t_2,0)$, or $U_2(t_2,0)$. The initial states of $S$ and $E$ are such that CPTP maps, $\Lambda_1(t_2,0)$ and $\Lambda_2(t_2,0)$, can describe the evolution of $S$ resulting from the unitaries, $U_1(t_2,0)$ and $U_2(t_2,0)$, respectively. For simplicity, let us consider the initial state of $SE$ to be a product.

We first try to apply the conventional quantum switch, $\mathcal{S}$, to this pair of evolutions, $\Lambda_1$ and $\Lambda_2$.
In this regard, let us divide the whole time range, $[0,t_2]$, into two parts, i.e., $[0,t_1)$ and $[t_1,t_2]$, and introduce a control qubit. When the state of the control qubit is $\ket{0}$ ($\ket{1}$), $\Lambda_2(t_2,t_1)\Lambda_1(t_1,0)$ ($\Lambda_1(t_2,t_1)\Lambda_2(t_1,0)$) acts on $S$. The dynamics, $\Lambda_1$ and $\Lambda_2$, being any arbitrary pair of CPTP evolutions, may not, in general, be CP-divisible. Therefore, even though the dynamics, $\Lambda_1$ and $\Lambda_2$, are CPTP, the maps, $\Lambda_1(t_2,t_1)$ and $\Lambda_2(t_2,t_1)$, may not have Kraus operator decomposition.
One can notice from Eq.~\eqref{eq1} that without Kraus operators, we can not define quantum switches in the usual way. Therefore, to define the superposition of different causal orders of CPTP dynamics, one or more of which are CP-indivisible, we need a quantum switch, which is not defined in terms of Kraus operators.

To understand the reason behind the difficulty in operating the conventional quantum switch in a more detailed way, let us discuss the scenario more elaborately. When the control qubit is, say, in state $\ket{0}$, as we mentioned above, the map $\Lambda_2(t_2,t_1)\Lambda_1(t_1,0)$ acts on $S$, i.e., $SE$ first evolves under the unitary $U_1(t_1,0)$ and then under $U_2(t_2,t_1)$. Since initially $SE$ is in a product state, the first evolution of $S$ between $[0,t_1)$ is CPTP. But during this interaction, $S$ and $E$ may have generated a quantum correlation between them, due to which the evolution of $S$ in the time range $[t_1,t_2]$ will no longer be CPTP. Hence, the evolution of $S$ from $t_1$ to $t_2$ may not be expressible in terms of Kraus operators. Instead, if we had introduced another new separate environment, $E_2$, at time $t_1$, and evolved $SE_2$ with unitary $U_2(t_2,t_1)$, then the new map, $\Lambda'_2(t_2,t_1)$, describing the evolution of the system $S$ within time $[t_1,t_2]$, would have been CPTP. Therefore, in such a scenario, the CQS could have been utilized. In this article, our purpose is to not introduce any new environment, $E_2$, and create an indefinite causal order of interactions of $S$ with the same environment, $E$. Therefore, we construct the universal quantum switch, $\mathcal{U}$, that can be used to create a superposition of causal orders of any quantum channels, $\Lambda_1$ and $\Lambda_2$, including CP-indivisible ones.

Let us now move to the construction of $\mathcal{U}$. We consider the initial state of the system to be $\rho_S$ and denote the Hilbert space of the system as $\mathcal{H}_S$. For ease of understanding, we focus on systems of dimension two, but the method of applying $\mathcal{U}$ on dynamics can easily be generalized for higher-dimensional systems.
$\Lambda_1(t_2,t_1) \Lambda_2(t_1,0) \rho_S$ and $\Lambda_2(t_2,t_1) \Lambda_1(t_1,0) \rho_S$ represent the sequential action of the quantum dynamics, $\Lambda_1$ and $\Lambda_2$, on the state $\rho_S$, in the two possible causal orders. Let both $\Lambda_1 (t_2,t_1)\Lambda_2(t_1,0) \rho_S$ and $\Lambda_2(t_2,t_1) \Lambda_1(t_1,0) \rho_S$ act on the same Hilbert space, $\mathcal{H}_S$, of dimension two.
We denote the action of the UQS on $\Lambda_1$ and $\Lambda_2$ as $\mathcal{U}(\Lambda_1,\Lambda_2)$.
To apply $\mathcal{U}(\Lambda_1,\Lambda_2)$ on $\rho_S$, we need to follow the steps presented below:
\begin{enumerate}
    \item 
    Determine the spectral decomposition of each of the quantum states: $\Lambda_1(t_2,t_1) \Lambda_2(t_1,0) \rho_S = \sum_{i=1}^2 \lambda_i \ket{\lambda_i} \bra{\lambda_i}$ and $\Lambda_2(t_2,t_1) \Lambda_1(t_1,0) \rho_S = \sum_{i=1}^2 \mu_i \ket{\mu_i} \bra{\mu_i}$.  Here $\lbrace \lambda_i \rbrace_i$ ($\lbrace \mu_i \rbrace_i$) and $\lbrace \ket{\lambda_i} \rbrace_i$ ($\lbrace \ket{\mu_i} \rbrace_i$) represent, respectively, the eigenvalues and eigenstates of $\Lambda_1(t_2,t_1) \Lambda_2 (t_1,0)\rho_S$ ($\Lambda_2(t_2,t_1) \Lambda_1(t_1,0) \rho_S$).
    \item 
    Find a basis that has a maximum overlap with both the spectral bases, $\lbrace \ket{\lambda_i} \rbrace_i$  and $\lbrace \ket{\mu_i} \rbrace_i$. 
This can be realized by finding the maxima of the following function:
\begin{eqnarray}
F(\rho_S,\Lambda_1,\Lambda_2)\coloneqq \sum_{i=1}^2 [ |\bra{\chi}\lambda_i \rangle| + |\bra{\chi^\perp} \lambda_i \rangle| \nonumber\\+ |\bra{\chi}\mu_i \rangle| + |\bra{\chi^\perp} \mu_i \rangle|],    \label{eq2}
\end{eqnarray}
over all pure states, $\ket{\chi}$, present in the Hilbert space, $\mathcal{H}_S$.
Here $\ket{\chi^\perp}$ denotes any pure state orthogonal to $\ket{\chi}$. Since the dimension of $\mathcal{H}_S$ is two, the set $\{\ket{\chi},\ket{\chi^\perp}\}$ will form a basis of $\mathcal{H}_S$. 
Let us denote the optimum basis for which $F(\rho_S,\Lambda_1,\Lambda_2)$ attains its maximum value as 
$\lbrace \ket{\chi_{\textnormal{opt}}}, \ket{\chi_{\textnormal{opt}}^{\perp}} \rbrace $. 
\item
Finally, we can construct two states as follows:
\begin{align}
\rho_{\textnormal{f}}^1(\rho_S,\Lambda_1,\Lambda_2) &\coloneqq  \lambda_1 \ket{\chi_{\textnormal{opt}}} +\lambda_2 \ket{\chi_{\textnormal{opt}}^\perp},\label{eq3}
\end{align}
and
\begin{align}
\rho_{\textnormal{f}}^2(\rho_S,\Lambda_1,\Lambda_2) &\coloneqq \mu_1 \ket{\chi_{\textnormal{opt}}} +\mu_2 \ket{\chi_{\textnormal{opt}}^\perp}.
\end{align}
\end{enumerate} 
Since the above states are found by constructing a basis that has maximum overlap with the eigenbases of both $\Lambda_2(t_2,t_1) \Lambda_1(t_1,0)\rho_S$ and $\Lambda_1(t_2,t_1) \Lambda_2(t_1,0)\rho_S$, we consider these states, $\rho_\textnormal{f}^{1}(\rho_S,\Lambda_1,\Lambda_2)$ and $\rho_\textnormal{f}^{2}(\rho_S,\Lambda_1,\Lambda_2)$, as the final states, obtained by acting $\mathcal{U}(\Lambda_1,\Lambda_2)$ on the state $\rho_S$.  

As one can notice, the action of $\mathcal{U}(\Lambda_1,\Lambda_2)$ on the state, $\rho_S$, has two definitions: one providing the state, $\rho^1_\textnormal{f}(\rho_S,\Lambda_1,\Lambda_2)$, and the other resulting in the state, $\rho^2_\textnormal{f}(\rho_S,\Lambda_1,\Lambda_2)$.
Since $\ket{\chi_{\textnormal{opt}}}$ and $\ket{\chi_{\textnormal{opt}}^\perp}$ may not be unique, the pair of states, $\rho_{\textnormal{f}}^1(\rho_S,\Lambda_1,\Lambda_2)$ and $\rho_\textnormal{f}^{2}(\rho_S,\Lambda_1,\Lambda_2)$, may also not be unique. 
Each and every one of these states will represent the action of $\mathcal{U}(\Lambda_1,\Lambda_2)$ on $\rho_S$.

We would like to note here that the states $\Lambda_2(t_2,t_1) \Lambda_1(t_1,0)\rho_S$ and $\Lambda_1(t_2,t_1) \Lambda_2(t_1,0)\rho_S$ can be determined without using the channel's Kraus decomposition. Hence, the construction of the UQS of any two or more quantum dynamics neither requires knowledge about the Kraus operators of any maps nor an external control system, like in the case of the CQS. Thus, we can construct a UQS for all quantum dynamics.

Here we have defined the UQS in such a way that it acts on two dynamics, $\Lambda_1$ and $\Lambda_2$, but the method can easily be generalized to form a UQS that can act on any arbitrary but fixed number of dynamics.

In the upcoming section, we will present an example where the UQS can be seen to be outperforming the CQS.

\section{Universal quantum switch provides an advantage in a state discrimination task}
\label{sdt}
In this section, we will discuss a state discrimination task in which, in certain parameter regions, the UQS will be witnessed to provide more advantages than the CQS. Let us consider the following scenario: Alice prepares a single-qubit state, $\tilde{\rho}$. She has a phase damping channel, $\Lambda_{PDC}(0.5)$, with noise strength $p_{PDC}=0.5$, and a unitary channel, $U$, which acts on two-dimensional systems. To avoid confusion, we would like to mention here that the number inside the braces in $\Lambda_{PDC}(0.5)$ represents the noise strength of the channel and not the time for which the evolution is taking place. In this scenario, Alice tosses a coin; if she gets the head, she applies both $\Lambda_{PDC}(0.5)$ and $U$; otherwise, she operates only $\Lambda_{PDC}(0.5)$ on the state $\tilde{\rho}$. She sends the final output state to Bob. Bob's task is to determine if the unitary has acted upon the state by performing measurements on the received state.
In these cases, when the result of the coin-toss is head, Alice can perform $\Lambda_{PDC}(0.5)$ and $U$ on $\tilde{\rho}$ either in a definite causal order or in an indefinite causal order. The operation of $\Lambda_{PDC}(0.5)$ and $U$ in indefinite causal order can again be implemented in two methods, viz., using the CQS or UQS. The exact difference between these three possible scenarios can be more clearly understood from the discussion below:
\begin{itemize}
    \item \textbf{\textit{Definite causal order:}} In this case, if the coin is head, Alice sends $\tilde{\rho}_1^{\text{DCO}}={U}(\Lambda_{PDC}(0.5) \tilde{\rho})U^\dagger$, and if it is tail, she sends $\tilde{\rho}_2=\Lambda_{PDC}(0.5) \tilde{\rho}$ to Bob.
    \item  \textbf{\textit{Conventional quantum switch:}} In this scenario, Alice transfers either the state, $\tilde{\rho}_1^{\text{CQS}}=\bra{\pm}\mathcal{S}(\Lambda_{PDC},{U}) (\tilde{\rho} \otimes \tilde{\omega}_C)\ket{\pm}$ or $\tilde{\rho}_2$ depending on the result of the coin toss, where $\tilde{\omega}_C=\ketbra{+}$ is the state of the control qubit. If the coin is head, she prepares and sends $\tilde{\rho}_1^{\text{CQS}}$, and otherwise, she sends $\tilde{\rho}_2$. See Eq.~\eqref{eq1} for more detail on the functional form of $\mathcal{S}$. 
    \item \textbf{\textit{Universal quantum switch:}} Here, if the coin is tail, Alice simply sends the state $\tilde{\rho}_2$ to Bob. Otherwise, Alice prepares the indefinite causal order of the two channels under consideration, i.e., $\Lambda_{PDC}(0.5)$ and $U$, following the model of the UQS, and acts the resulting map on $\tilde{\rho}$. Specifically, she prepares the state $\tilde{\rho}^{\text{UQS}}_1 =\rho_1^f(\tilde{\rho},U,\Lambda_{PDC})= \tilde{\lambda}_1 \ket{\tilde{\chi}_{\textnormal{opt}}} +\tilde{\lambda}_2 \ket{\tilde{\chi}_{\textnormal{opt}}^\perp}$, where $\{\ket{\tilde{\chi}_{\textnormal{opt}}},\ket{\tilde{\chi}_{\textnormal{opt}}^\perp}\}$ is the basis for which $F(\tilde{\rho},U,\Lambda_{PDC})$, defined in Eq.~\eqref{eq2}, reaches its maximum value. Here $\lbrace \tilde{\lambda}_i \rbrace_i$ is the set of eigenvalues of ${U}(\Lambda_{PDC}(0.5) \tilde{\rho})U^\dagger$.  
\end{itemize} 
Bob has knowledge about which state he would receive if Alice got a head in the coin toss and what state he would get in the case of a tail. Nevertheless, Bob does not have any information about the exact result of the coin toss. After receiving the state, Bob tries to find out if Alice got a head or tail, i.e., Bob, knowing $m$, tries to discriminate between the pair of states $\{\tilde{\rho}_1^m,\tilde{\rho}_2\}$, where $m=$DCO, CQS, or UQS.

Bob follows the minimum error state discrimination protocol to distinguish the states. The minimum probability of error in discriminating any two states using the minimum error state discrimination method has already been found, and it is generally known as the Helstrom bound~\cite{Helstrom,min-sd-2,min-sd-3,min-sd-4,min-sd-5,min-sd-6,min-sd-7}.
If two states, say $\rho_1$ and $\rho_2$, are prepared with the probability $p_1$ and $p_2=1-p_1$, respectively, the Helstrom bound tells us that the minimum probability of incorrectly guessing the state is $p_{\textnormal{err}} (\rho_1,\rho_2) \coloneqq \frac{1}{2} - \frac{1}{2} || p_1 \rho_1 - p_2 \rho_2||$.
We can utilize the Helstrom bound to determine the minimum probability of error in distinguishing the states $\{\tilde{\rho}_1^m,\tilde{\rho}_2\}$ for all $m$.
\begin{figure}[h!]
\includegraphics[scale=0.95]{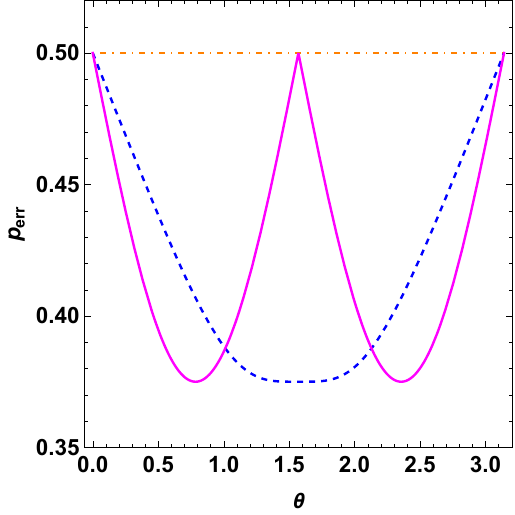}
\caption{\textbf{Comparative analysis between the definite causal order of dynamics and the indefinite causal order of channels modeled by conventional and universal quantum switches.} We present the parameter, $\theta$, involved in the unitary, $U$, and the minimum probability, $p_{\textnormal{err}}$, of error in discriminating the noisy state, $\tilde{\rho}_2$, and the unitarily transformed noisy state, $\tilde{\rho}_1^m$, on the horizontal and vertical axes, respectively, for $m=$DCO, CQS, and UQS. The dashed-dotted orange, dashed blue, and solid purple lines, respectively, show the behaviors of $p_{\textnormal{err}}\left(\tilde{\rho}_1^\text{DCO},\tilde{\rho}_2\right)$, $p_{\textnormal{err}}\left(\tilde{\rho}_1^\text{CQS},\tilde{\rho}_2\right)$, and $p_{\textnormal{err}}\left(\tilde{\rho}_1^\text{UQS},\tilde{\rho}_2\right)$ with respect to $\theta$. All quantities are dimensionless.}
\label{fig5} 
\end{figure}
Considering the coin of Alice to be unbiased, $\tilde{\rho}=\ket{+}\bra{+}$, and 
\begin{align}
U= \cos \theta \mathbbm{I}_2  - \iota \sin \theta \sigma_y,\nonumber
\end{align}
where $\theta \in \lbrack 0, \pi \rbrack$ is a constant, in Fig.~\ref{fig5}, we depict the behavior of the minimum probability of error in the state discrimination protocol against $\theta$.
Specifically, in the figure, we plot $p_{\textnormal{err}} \left(\tilde{\rho}_1^\text{DCO},\tilde{\rho}_2\right)$, $p_{\textnormal{err}} \left(\tilde{\rho}_1^\text{CQS},\tilde{\rho}_2\right)$ and $p_{\textnormal{err}} \left(\tilde{\rho}_1^\text{UQS},\tilde{\rho}_2\right)$ using dashed-dotted orange line, dashed blue line, and solid pink line, respectively. From the plot, it can be noticed that there exist two wide ranges of $\theta$, i.e., $\theta\in [0,1.01314)$ and $\theta\in(2.12846,\pi] $ for which $p_{\textnormal{err}} \left(\tilde{\rho}_1^\text{UQS},\tilde{\rho}_2\right)$ is the smallest among the three error probabilities. Therefore, within these ranges, the application of $U$ and $\Lambda_{PDC}(0.5)$ in indefinite causal order defined using the method of the UQS provides the most advantage compared to the other cases. Thus, we realize that in certain circumstances, the UQS can exhibit a benefit over the CQS.

\section{universal quantum switch can retain CP-indivisibility when acting upon CP-indivisible maps}
In this section, we will form an indefinite causal order of two CP-indivisible dynamics using a quantum switch and show that the resulting dynamics remains CP-indivisible.
\label{sec4}
In this regard, let us take a composite set-up consisting of a system, $S$, and an environment, $E$. We consider both of them to have dimension two. Let the initial state of the environment be $\rho_E$, which is in product with the system, $S$. At a moment, the system begins to interact with the environment. Hence, the entire set-up, $SE$, starts to unitarily evolve with time. Let there be two unitaries, $U_1(t_2,0)=\exp(-\iota H_1t_2/\hbar)$ and $U_2(t_2,0)=\exp(-\iota H_2t_2/\hbar)$, which dictate the evolution of $SE$ for the time interval $t_2$. Here $\hbar$ represents Planck's constant. The two Hamiltonians, $H_1$ and $H_2$, are given by
\begin{align*}
H_1 \coloneqq h\sigma_z \otimes \mathbbm{I}_2 + h\mathbbm{I}_2 \otimes \sigma_z + J_1 \sigma_x \otimes \sigma_x,\\
H_2 \coloneqq h\sigma_z \otimes \mathbbm{I}_2 + h\mathbbm{I}_2 \otimes \sigma_z + J_2 \sigma_x \otimes \sigma_x,
\end{align*}
where $J_1 = 0.5h$ and $J_2 = h$.
Since the considered unitaries are entangling, the effective evolution of the system, $S$, will certainly not be unitary.  Let us denote the system's evolution due to the interaction strengths, $J_1$ and $J_2$, within time $[0,t_2]$, as $\Lambda_{H_1}(t_2,0)$ and $\Lambda_{H_2}(t_2,0)$.
Considering the two following initial states of the system
\begin{align*}
\sigma_S^1 &= \left( \begin{array}{cc}
\frac{1}{2}(1+\frac{1}{\sqrt{2}}) & \frac{1}{2 \sqrt{2}} \\
\frac{1}{2 \sqrt{2}} &
\frac{1}{2}(1-\frac{1}{\sqrt{2}}) \end{array} \right),\quad \sigma_S^2 = \frac{1}{2} \left( \begin{array}{cc}
1 & 1 \\
1 & 1 \end{array} \right),
\end{align*}
and fixing the initial state of the environment at $\rho_E=\ket{0}\bra{0}$, we can calculate $D(\Lambda_{H_i}(t_2,0)\sigma^1_S, \Lambda_{H_i}(t_2,0)\sigma^2_S)$. From $D(\Lambda_{H_i}(t_2,0)\sigma^1_S, \Lambda_{H_i}(t_2,0)\sigma^2_S)$ it can be easily checked using the P-divisibility criteria, discussed in Sec. \ref{sec2}, that $\Lambda_{H_i}$ is P-indivisible for both $i=1$ and 2. Since P-indivisibility implies CP-indivisibility, we can conclude both $\Lambda_{H_1}$ and $\Lambda_{H_2}$ are CP-indivisible as well.

We want to find the indefinite causal order of the two dynamics, $\Lambda_{H_1}$ and $\Lambda_{H_2}$, on the system, $S$. In this regard, we divide the entire time range, $[0,t_2]$, into two parts, viz., $[0,t_1)$ and $[t_1,t_2]$. Within the two time intervals, the system interacts with the same environment with different strengths, either $J_1$ or $J_2$. We will refer to the maps acting on the system between time $t_1$ and $t_2$ because of the system's interaction with $E$ of strengths $J_1$ and $J_2$, as $\Lambda_{H_1}(t_2,t_1)$ and $\Lambda_{H_2}(t_2,t_1)$, respectively.
One needs to keep in mind that, since at time $t_1$, $S$ may be entangled with $E$, the maps, $\Lambda_{H_1}(t_2,t_1)$ and $\Lambda_{H_2}(t_2,t_1)$, may not be CPTP and therefore will not have any Kraus operator decomposition. Therefore, we cannot use CQS to generate the indefinite causal order of dynamics. Hence, we will use the UQS to form the indefinite causal orders of the two dynamics, $\Lambda_{H_1}$ and $\Lambda_{H_2}$. 
\begin{figure}[h!]
\includegraphics[scale=0.95]{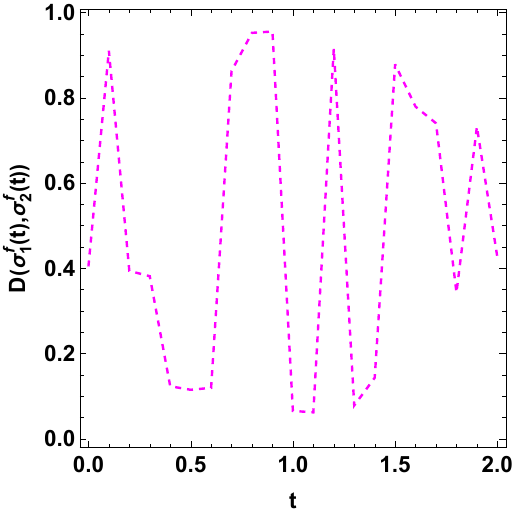}
\caption{\textbf{Preservation of CP-indivisibility under the application of the universal quantum switch}.
We plot trace distance, $D(\sigma_f^1, \sigma_f^2)$, between the two states, $\sigma_f^1$ and $\sigma_f^2$, along the vertical axis with time, $t$, shown along the horizontal axis, using a pink dashed line. The visible non-monotonicity of the pink dashed curve with $t$ proves that the dynamics generated by operating UQS on two CP-indivisible dynamics, $\Lambda_{H_1}$ and $\Lambda_{H_2}$, is CP-indivisible. The horizontal axis is in the units of $\hbar/h$, and the vertical axis is dimensionless.}
\label{fig6} 
\end{figure}
To examine if $\mathcal{U}(\Lambda_{H_1},\Lambda_{H_2})$ is CP-indivisible, we consider the same initial states, $\sigma_S^1$ and $\sigma_S^2$, of the system, $S$, and environment, $\rho_E$. We first evaluate the states: $\Lambda_{H_2}(t_2,t_1)\Lambda_{H_1}(t_1,0)\sigma_S^i=\tr_E\left[{U_2(t_2,t_1)U_1(t_1,0)\sigma_S^i\otimes \rho_E U^\dagger_1(t_1,0) U^\dagger_2(t_2,t_1)}\right]$ and $\Lambda_{H_1}(t_2,t_1)\Lambda_{H_2}(t_1,0)\sigma_S^i=\tr_E\left[{U_1(t_2,t_1)U_2(t_1,0)\sigma_S^i\otimes\rho_E U^\dagger_2(t_1,0) U^\dagger_1(t_2,t_1)}\right]$ for $i=1$ and 2, and find their spectral decompositions. Finally, using their eigen decomposition, we determine $\rho^1_\textnormal{f}(\sigma_S^1,\Lambda_{H_1},\Lambda_{H_2})$ and $\rho^1_\textnormal{f}(\sigma_S^2,\Lambda_{H_1},\Lambda_{H_2})$ and denote them as $\sigma^1_\textnormal{f}$ and $\sigma^2_\textnormal{f}$, respectively. Hence, $\sigma^1_\textnormal{f}$ and $\sigma^2_\textnormal{f}$ are the final states after the action of $\mathcal{U}(\Lambda_{H_1},\Lambda_{H_2})$ on $\sigma^1_S$ and $\sigma^2_S$, respectively.
We would like to mention here that the explicit forms of the states, $\sigma^1_\textnormal{f}$ and $\sigma^2_\textnormal{f}$, depend on the time parameters, $t_1$ and $t_2$. Considering $t_1=t$ and $t_2=2t$, in Fig.~\ref{fig6}, we plot $D(\sigma^1_\textnormal{f}, \sigma^2_\textnormal{f})$, as a function of $t$. It is noticeable from the figure that there are various instances when $D(\sigma^1_\textnormal{f}, \sigma^2_\textnormal{f})$ increases with $t$, indicating the non-monotonic behavior of $D(\sigma^1_\textnormal{f}, \sigma^2_\textnormal{f})$. Hence, it can be concluded that the dynamics, $\mathcal{U}(\Lambda_{H_1},\Lambda_{H_2})$, is CP-indivisible.

\section{When do quantum switches preserve CP-divisibility?}
\label{sec5}
In this section, we investigate the necessary and sufficient condition for the indefinite causal orders of two arbitrary CP-divisible dynamics, $\Lambda_1^{CP}$ and $\Lambda_2^{CP}$, to be CP-divisible.
We would like to note here that since $\Lambda_1^{CP}$ ($\Lambda_2^{CP}$) is CP-divisible, $\Lambda_1^{CP}(t_2,t_1)$ ($\Lambda_2^{CP}(t_2,t_1)$) will always have Kraus operator decomposition for all $t_2\geq t_1$. Hence to prepare the indefinite causal order we can utilize the CQS, $\mathcal{S}$. We focus on the condition of the
dynamics, $\{\Sigma(\Lambda_1^{CP},\Lambda_2^{CP},t_2,0,\cdot )\}_{t_2}$, created by acting $\mathcal{S}$ on the CP-divisible dynamics, $\Lambda_1^{CP}$ and $\Lambda_2^{CP}$,  
to be CP-divisible. 
We also provide two examples of pairs of CP-divisible dynamics first of which remains CP-divisible when a CQS acts on them and the other becomes CP-indivisible when inserted in the CQS. Furthermore, we discuss examples of pairs of CP-divisible dynamics on which, when a CQS acts, the resulting dynamics not only becomes CP-indivisible but also P-indivisible.

We know, the dynamics, $\{\Sigma(\Lambda_1^{CP},\Lambda_2^{CP},t_2,0,\cdot )\}_{t_2}$, constructed by applying the CQS on two CP-divisible dynamics, $\Lambda_1^{CP}$ and $\Lambda_2^{CP}$, will be referred to as CP-divisible, if and only if 
\begin{eqnarray}
    &{}_C \bra{\pm} S(\phi_1, \phi_2)(t_2,0) (\rho_S \otimes \omega_C) \ket{\pm}_C\\ &= {}_C \bra{\pm} S(\phi_1, \phi_2) (t_2, t_1) S(\phi_1, \phi_2) (t_1,0) (\rho_S \otimes \omega_C) \ket{\pm}_C\nonumber \label{extra2}
\end{eqnarray}
 holds for each and every time $t_2 \geq t_1 \geq 0$ and all quantum states, $\rho_S$, with a control qubit, $\omega_C$, where ${}_C \bra{\pm} S(\phi_1, \phi_2) (t_2, t_1) (\rho_S \otimes \omega_C) \ket{\pm}_C$ is a complete positive map.} 

In the next part, we state and prove the necessary and sufficient condition for the dynamics, $\{\Sigma(\Lambda_1^{CP},\Lambda_2^{CP},t_2,0,\cdot )\}_{t_2}$, built by acting the CQS on two arbitrary CP-divisible dynamics, $\Lambda_1^{CP}$ and $\Lambda_2^{CP}$, to be CP-divisible. 
\begin{lemma}
\label{proof-final-CP-divisible}
    The dynamics, $\{\Sigma(\Lambda_1^{CP},\Lambda_2^{CP},t_2,0,\cdot )\}_{t_2}$, constructed by applying the CQS on two CP-divisible dynamics, $\Lambda_1^{CP}$ and $\Lambda_2^{CP}$, is CP-divisible, if and only if, the Kraus operators, $\{K_i^{(l)} (s,t_1)\}_i$ and $\{K_i^{(l)} (t_2,s)\}_i$, of $\Lambda_l^{CP}(s,t_1)$ and $\Lambda_l^{CP}(t_2,s)$ that describe the evolution of the system, respectively, in between the time, $t_1$, $s$ and $s$, $t_2$, satisfy 
    \begin{align}
    \label{commute-cp-d-eqn}
        K_i^{(1)} (t_2,s) K_j^{(2)} (s,t_1) = K_j^{(2)} (t_2,s) K_i^{(1)} (s,t_1),
    \end{align}
 for all $s$, $t_1$, $t_2$, $i$, $j$ that obey $t_2 \geq s \geq t_1$. Here $l$ takes two values, 0 and 1, representing the two different dynamics, $\Lambda_1^{CP}$ and $\Lambda_2^{CP}$. 
\end{lemma}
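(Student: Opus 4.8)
The plan is to reduce the whole question to the effective Kraus operators of the switched channel after the control qubit is projected, and then to recognise that Eq.~\eqref{commute-cp-d-eqn} is exactly the condition under which the superposition of the two causal orders collapses to a single definite order. Concretely, taking the control state $\omega_C=\ket{+}\bra{+}$ and projecting onto $\ket{+}_C$ in Eq.~\eqref{eq1}, the map $\Sigma(\Lambda_1^{CP},\Lambda_2^{CP},t_2,t_1,\cdot)$ acts, up to the normalisation $N$, as $\rho_S\mapsto\sum_{ij}M_{ij}\,\rho_S\,M_{ij}^\dagger$ with
\begin{equation*}
M_{ij}=\tfrac12\big(K_i^{(1)}(t_2,s)K_j^{(2)}(s,t_1)+K_j^{(2)}(t_2,s)K_i^{(1)}(s,t_1)\big).
\end{equation*}
The first step is simply to record that Eq.~\eqref{commute-cp-d-eqn} says precisely that the two summands defining each $M_{ij}$ coincide.

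\textbf{Sufficiency.} Assuming Eq.~\eqref{commute-cp-d-eqn} and substituting it into $M_{ij}$, each effective Kraus operator reduces to the single product $M_{ij}=K_i^{(1)}(t_2,s)K_j^{(2)}(s,t_1)$, so the switched channel becomes the ordinary definite-order composition $\Lambda_1^{CP}(t_2,s)\circ\Lambda_2^{CP}(s,t_1)$. Since the commutation is assumed for every admissible triple $t_1\le s\le t_2$, this definite-order composition can be consistently factored at any intermediate time; as a composition of CP maps of two individually CP-divisible dynamics it is itself CP and divisible, so the resulting dynamics is CP-divisible. This direction is essentially a substitution together with the remark that a definite-order composition of CP-divisible maps is CP-divisible.

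\textbf{Necessity.} This is the substance of the proof. I would assume CP-divisibility, i.e.\ that the full-interval switched map equals the composition of the switched maps on $[s,t_2]$ and $[t_1,s]$ with a completely positive connecting map, and unfold both sides into Kraus form. Subtracting, the products common to both causal orders cancel and one is left with the interference terms $K_i^{(1)}(t_2,s)K_j^{(2)}(s,t_1)-K_j^{(2)}(t_2,s)K_i^{(1)}(s,t_1)$ and their conjugates; requiring the two channels to agree on every input $\rho_S$ should force these to vanish, which is exactly Eq.~\eqref{commute-cp-d-eqn}. The delicate point is that two Kraus families representing the same channel need not be equal term by term — they are related by a partial isometry — so I would not argue termwise but instead compare the two channels through their Choi operators (equivalently, evaluate on a spanning set of rank-one inputs $\ket{\psi}\bra{\psi}$ and use linearity), showing that the interference contribution must be identically zero before concluding the operator identity.

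\textbf{Main obstacle.} I expect the hard part to be exactly this Kraus non-uniqueness in the necessity direction, combined with the bookkeeping of the internal split $s$: one must guarantee that the intermediate switched map is genuinely completely positive (not merely that the superoperator equation holds) and that the commutation is extracted for all admissible $t_1\le s\le t_2$ rather than for one fixed split. Phrasing the comparison at the level of Choi operators seems the cleanest way to handle both issues simultaneously, since there complete positivity becomes positive semidefiniteness and the interference terms appear as off-diagonal blocks whose vanishing is equivalent to Eq.~\eqref{commute-cp-d-eqn}.
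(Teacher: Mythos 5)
Your sufficiency argument is correct and is essentially the paper's: under Eq.~\eqref{commute-cp-d-eqn} each effective Kraus operator $M_{ij}=\tfrac{1}{2}\bigl(A_{ij}+B_{ij}\bigr)$, with $A_{ij}=K_j^{(2)}(t_2,s)K_i^{(1)}(s,t_1)$ and $B_{ij}=K_i^{(1)}(t_2,s)K_j^{(2)}(s,t_1)$, collapses to a single definite-order product, after which CP-divisibility follows from that of $\Lambda_1^{CP}$ and $\Lambda_2^{CP}$.

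The necessity direction, however, has a genuine gap, and the difficulty you flag (Kraus non-uniqueness) is not where the problem lies. First, the cancellation you invoke --- ``the products common to both causal orders cancel,'' leaving only the interference operators --- does not occur: the full-interval switched map and the composition of the two intermediate switched maps carry interference structures built on \emph{different} internal splits, and their difference is not a sum of the operators $A_{ij}-B_{ij}$ and their conjugates. Second, your fallback of passing to Choi operators ``since complete positivity becomes positive semidefiniteness'' extracts nothing: the (unnormalized) switched maps are completely positive \emph{by construction} --- they are manifestly in Kraus form --- so positivity of their Choi matrices is automatic and encodes no constraint on the commutators. The constraint that actually does the work, and which your proposal never uses, is \emph{trace preservation} of the intermediate switched map $\Sigma(\Lambda_1^{CP},\Lambda_2^{CP},t_2,t_1,\cdot)$, which CP-divisibility forces to be CPTP. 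Writing it as $\sum_q M_q\rho_S M_q^\dagger$ with $\sum_q M_q^\dagger M_q=\mathbbm{I}$, equating $\Tr\bigl[\Sigma(\Lambda_1^{CP},\Lambda_2^{CP},t_2,t_1,\rho_S)\bigr]$ with $\Tr[\rho_S]$ for all $\rho_S$, and using the completeness relations $\sum_{ij}A_{ij}^\dagger A_{ij}=\sum_{ij}B_{ij}^\dagger B_{ij}=\mathbbm{I}$ yields the operator identity $\sum_{ij}\bigl(A_{ij}^\dagger B_{ij}+B_{ij}^\dagger A_{ij}\bigr)=2\mathbbm{I}$, equivalently
\begin{equation*}
\sum_{ij}\bigl(A_{ij}-B_{ij}\bigr)^{\dagger}\bigl(A_{ij}-B_{ij}\bigr)=0,
\end{equation*}
and since a sum of positive semidefinite operators vanishes only if every summand vanishes, $A_{ij}=B_{ij}$ for all $i,j$, which is Eq.~\eqref{commute-cp-d-eqn}. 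Note that this argument never identifies two Kraus families term by term --- only completeness relations are used --- so the non-uniqueness issue you worried about simply does not arise in the proof; it matters only for the separate observation (the paper's Remark 1) that condition \eqref{commute-cp-d-eqn} itself is independent of the chosen decompositions, because different decompositions are unitarily related.
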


\begin{proof}
The exact expressions of $\Sigma(\Lambda_1^{CP},\Lambda_2^{CP},t_2,0,\rho_S )$ and $\Sigma(\Lambda_1^{CP},\Lambda_2^{CP},t_2,t_1,\Sigma(\Lambda_1^{CP},\Lambda_2^{CP},t_1,0,\rho_S ) )$ are given in the Appendix, where $\rho_S$ is the initial state of the system under consideration. 
If Eq.~\eqref{commute-cp-d-eqn} be fulfilled by all the Kraus operators of the maps, $\Lambda_l^{CP}(s,t_1)$ and $\Lambda_l^{CP}(t_2,s)$, for all $t_1$, $t_2$, $s$, and $l=1$ and 2, it is evident that the requirements, mentioned in Eq.~\eqref{extra2}, for the action of the CQS on two CP-divisible dynamics to be CP-divisible will be satisfied. Hence, it proves that Eq.~\eqref{commute-cp-d-eqn} acts as a sufficient condition for the action of the CQS on two CP-divisible dynamics to provide a CP-divisible dynamics.

Let us now move to the necessary condition. In this regard, we consider the dynamics, $\{\Sigma(\Lambda_1^{CP},\Lambda_2^{CP},t_2,0,\cdot )\}_{t_2}$, to be CP-divisible. Hence, $\Sigma(\Lambda_1^{CP},\Lambda_2^{CP},t_2,t_1,\cdot )$, must be CPTP for all intermediate times, $t_1$, which satisfy $t_2 \geq t_1 \geq 0$. Therefore, we can write
\begin{align*}
\Sigma(\Lambda_1^{CP},\Lambda_2^{CP},t_2,t_1,\rho_S ) = \sum_q M_q(t_2,t_1) \rho_S \left[M_q(t_2,t_1)\right]^\dagger,
\end{align*}
for all valid states, $\rho_S$, of the system, $S$, where $\{M_q(t_2,t_1)\}_q$ is a set of Kraus operators. Substituting the exact expression of $\Sigma(\Lambda_1^{CP},\Lambda_2^{CP},t_2,t_1,\rho_S )$ (written in the Appendix) in the left-hand side of the above equation and taking trace on both sides, we have
\begin{widetext}
\begin{eqnarray*}
 &&\frac{1}{2} \Bigg[ \sum_{ij} \Tr \Bigg[ \left[K_j^{(2)} (s,t_1)\right]^\dagger \left[K_i^{(1)} (t_2,s)\right]^\dagger K_j^{(2)} (t_2,s) K_i^{(1)}(s,t_1) \rho_S   \nonumber \\ && + \left[K_i^{(1)}(s,t_1)\right]^\dagger \left[K_j^{(2)}(t_2,s)\right]^\dagger K_i^{(1)}(t_2,s)  K_j^{(2)}(s,t_1) \rho_S \Bigg] \Bigg]
  = \Tr \left[ \rho_S \right].
\end{eqnarray*} 
\end{widetext}
Here we have used the fact that $\Lambda_1^{CP}$ and $\Lambda_2^{CP}$ are two CP-divisible dynamics.
Since the above equation is true for all $\rho_S$, we can conclude that 
\begin{widetext}
\begin{align*}
& \frac{1}{2} \left[ \sum_{ij} \Tr \left[ \left[K_j^{(2)} (s,t_1)\right]^\dagger \left[K_i^{(1)} (t_2,s)\right]^\dagger K_j^{(2)} (t_2,s) K_i^{(1)}(s,t_1)   + \left[K_i^{(1)}(s,t_1)\right]^\dagger \left[K_j^{(2)}(t_2,s)\right]^\dagger K_i^{(1)}(t_2,s)  K_j^{(2)}(s,t_1) \right] \right]= \mathbb{I}, 
\end{align*} 
\end{widetext}
where $\mathbbm{I}$ represents the identity operator which acts on the system's Hilbert space. By simplifying the above expression we get
\begin{align*}
 \sum_{ij} Y_{ij}^\dagger Y_{ij} = 0,\label{commute-cp-d-proof-5}
\end{align*}
where $Y_{ij} = K_j^{(2)}(t_2,s) K_i^{(1)}(s,t_1) - K_i^{(1)}(t_2,s) K_j^{(2)}(s,t_1)$. From the above equation, we can write $\sum_{ijk} |\lambda_{ij}^k|^2 =0$, where $\lambda_{ij}^k$ is the $k$th eigenvalue of $Y_{ij}$. Hence we get $\lambda_{ij}^k=0$ for all $i$, $j$, $k$, which implies $Y_{ij}=0$ for all $i$ and $j$. Therefore we have 
\begin{align*}
    K_i^{(1)} (t_2,s) K_j^{(2)} (s,t_1) = K_j^{(2)} (t_2,s) K_i^{(1)} (s,t_1).
\end{align*}
Thus we get the condition expressed in Eq.~\eqref{commute-cp-d-eqn},  is necessary to hold for the dynamics, $\{\Sigma(\Lambda_1^{CP},\Lambda_2^{CP},t_2,0\cdot )\}_{t_2}$, to be CP-divisible. 
Hence it completes the proof.
\end{proof}

\textbf{Remark 1.} Despite the fact that the Kraus operator decomposition of any CPTP map is not unique~\cite{chuang}, the action of the CQS on two CPTP dynamics remains unique, whatever the considered Kraus decomposition of the maps that are participating in the switch~\cite{switch-first-paper}. Since different Kraus operator decompositions of CPTP maps are unitarily connected with each other~\cite{chuang}, if Eq.~\eqref{commute-cp-d-eqn} holds for particular Kraus operator decompositions of a pair of maps, it will be satisfied for all Kraus operator decompositions of those pair of maps. Therefore, the stated necessary and sufficient condition does not depend on which set of Kraus operators is being considered.\vspace{2.0mm}

\textbf{Remark 2.} The theorem can be generalized for any CQS that acts on an arbitrary but fixed number of CP-divisible dynamics. If we act CQS on $N$ CP-divisible dynamics, the resulting dynamics will form a superposition of $N!$ causal orders of the input dynamics. Hence, the satisfaction of the commutativity relation, as expressed in Eq.~\eqref{commute-cp-d-eqn}, among Kraus operators of each and every pair of the set of $N$ maps, for all times, jointly serves as a necessary and sufficient criterion for maintaining CP-divisibility of the dynamics created by applying the CQS on $N$ CP-divisible dynamics.\vspace{2.0mm}

Let us now come to some examples. The first example is of a dynamics generated by acting the CQS on two phase damping channels. Since the Kraus operators, $K_1^{(1)}(t_2,t_1)=e^{-\Gamma_{PDC}^{(1)} (t_2-t_1)/2}\mathbb{I}_2$, $K_2^{(1)}(t_2,t_1)=\sqrt{1-e^{-\Gamma_{PDC}^{(1)} (t_2-t_1)}}\sigma_z$,  and $K_1^{(2)}(t_2,t_1)=e^{-\Gamma_{PDC}^{(2)} (t_2-t_1)/2}\mathbb{I}_2$, $K_2^{(2)}(t_2,t_1)=\sqrt{1-e^{-\Gamma_{PDC}^{(2)} (t_2-t_1)}}\sigma_z$ of two ideal phase damping channels satisfy Eq.~\eqref{commute-cp-d-eqn}, where $\Gamma_{PDC}^{(1)}$ and $\Gamma_{PDC}^{(2)}$ are the Lindblad coefficients of the channels, we arrive at the following conclusion:
\begin{example}
The dynamics produced by applying the CQS on two phase damping channels is CP-divisible.
\end{example}

One can notice from the expressions presented in Eqs.~\eqref{extra1} and~\eqref{adc} that the Kraus operators of the amplitude damping channel do not commute with each other. Hence, they do not satisfy Eq.~\eqref{commute-cp-d-eqn}. Therefore, we get
\begin{example}
The dynamics formed by applying the CQS on two ideal amplitude damping channels is CP-indivisible.
\end{example}

The examples assure that the fulfillment of the necessary and sufficient condition given in Eq.~\eqref{commute-cp-d-eqn} can be easily verified for quantum dynamics. 

Let us now try to check if it is possible to prepare the P-indivisible dynamics by operating the CQS on pairs of CP-divisible dynamics. We will make use of Eq.~\eqref{eq-BLP} to detect the P-indivisibility of the dynamics constructed by the CQS.
In all the examples discussed below, to apply the CQS using the definition given in Eq.~\ref{eq1}, we take $t_1=t$ and $t_2=2t$, where $t$ is the half of the total evolution time of the system.

We know an ideal depolarizing channel is CP-divisible. Regardless of whether we consider the Lindblad coefficients of a pair of depolarizing channels to be the same or different, the Kraus operators of the channels are not going to commute. Therefore, according to Theorem \ref{proof-final-CP-divisible}, the operation of CQS on any two depolarizing channels will produce CP-indivisible dynamics. Let us further check if the produced dynamics is P-indivisible. 
In this regard, we take two initial states, $\rho_S^1 = \ket{0} \bra{0}$ and $\rho_S^2 = \ket{1}\bra{1}$, of a system on which the depolarizing channels can act. First, we consider the two depolarizing channels to have equal Lindblad coefficients, $\Gamma_{DC}= \Gamma$. 
\begin{figure}[h!]
\includegraphics[scale=0.95]{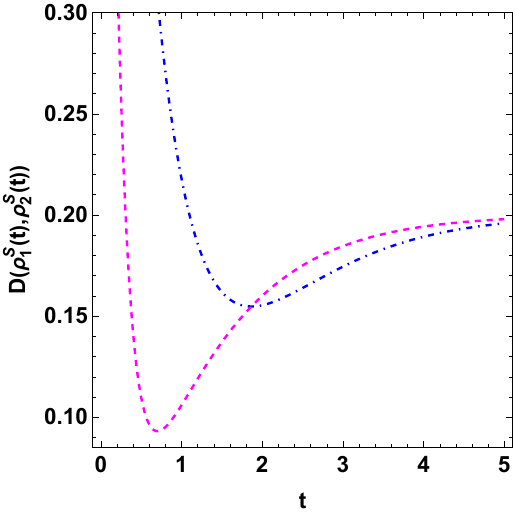}
\caption{\textbf{Exploring the behavior of  dynamics produced by acting the conventional quantum switch on two depolarizing channels}.
In the vertical axis, we plot $D(\rho_S^1(t),\rho_S^2(t))$ with respect to $t$, presented along the horizontal axis, where $\rho_S^1(t)=\Sigma(\Lambda_{DC},\Lambda_{DC},t,0,\rho_S^1)$ and $\rho_S^2(t)=\Sigma(\Lambda_{DC},\Lambda_{DC},t,0,\rho_S^2)$. The blue dash-dotted and pink dashed lines represent the scenarios where the two depolarizing channels on which the CQS is being applied have the same Lindblad coefficients, $\Gamma_{DC}=\Gamma$, and different Lindblad coefficients, i.e., $\Gamma_{DC}=\Gamma$ and $\Gamma_{DC}=5\Gamma$, respectively. The vertical axis is dimensionless, whereas the horizontal axis has units of $\Gamma^{-1}$.}
\label{fig2} 
\end{figure}
We evaluate $\rho_S^1(t)=\Sigma(\Lambda_{DC},\Lambda_{DC},t,0,\rho_S^1)$ and $\rho_S^2(t)=\Sigma(\Lambda_{DC},\Lambda_{DC},t,0,\rho_S^2)$ and plot the trace distance, $D(\rho_S^1(t),\rho_S^2(t))$, between $\rho_S^1 (t)$ and $\rho_S^2 (t)$ with time, $t$, in Fig. \ref{fig2}, using blue dash-dotted curve.
One can notice from the curve that $D(\rho_S^1(t),\rho_S^2(t))$ shows non-monotonic behavior with respect to $t$. This characteristic proves that the operation of the CQS on two depolarizing channels with the same Lindblad coefficients can produce P-indivisible dynamics.
Next, we apply the CQS on two depolarizing channels having unequal Lindblad coefficients, i.e., one with $\Gamma_{DC} = \Gamma$ and the other with $\Gamma_{DC} = 5\Gamma$.
For these two considered depolarizing channels, we again plot the trace distance, $D(\rho_S^1(t),\rho_S^2(t))$, in Fig. \ref{fig2}, using a pink dashed line, where $\rho_S^1(t)=\Sigma(\Lambda_{DC},\Lambda_{DC},t,0,\rho_S^1)$ and $\rho_S^2(t)=\Sigma(\Lambda_{DC},\Lambda_{DC},t,0,\rho_S^2)$. In this scenario also, we can notice a non-monotonic behavior in the pink dashed curve with respect to time, $t$, proving the P-indivisible nature of the dynamics, $\{\Sigma(\Lambda_{DC},\Lambda_{DC},t,0,\cdot)\}_t$. Hence, we get to the following conclusion:
\begin{example}
The action of the CQS on two CP-divisible dynamics, which are ideal depolarizing channels having the same or different Lindblad coefficients, can produce a P-indivisible dynamics.
\end{example}

\begin{figure}[h!]
\includegraphics[scale=0.95]{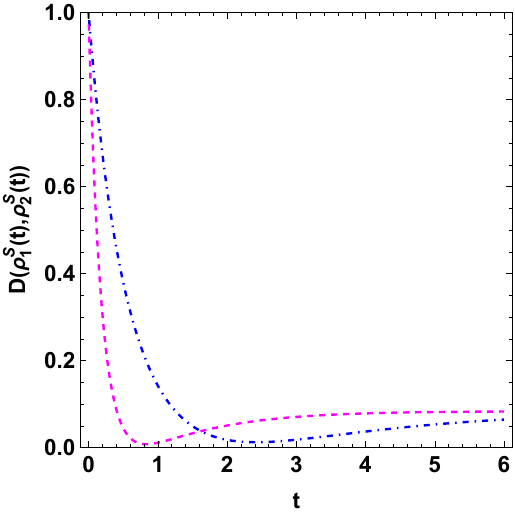}
\caption{\textbf{Nature of the P-indivisibility of dynamics generated by applying the conventional quantum switch on two CP-divisible dynamics, viz. ideal depolarizing and amplitude damping channels}. The quantity, $D(\rho_S^1(t),\rho_S^2(t))$, is plotted along the vertical axis against $t$, represented along the horizontal axis. Here $\rho_S^1(t)$ ($\rho_S^2(t)$) is the state found by acting the map $\Sigma(\Lambda_{DC},\Lambda_{ADC},t,0,\cdot)$ on $\rho_S^1$ ($\rho_S^2$). The blue dash-dotted and pink dashed lines denote the cases where the considered channels, i.e., the ideal depolarizing and amplitude damping noise, have, respectively, equal Lindblad coefficients ($\Gamma_{DC}=\Gamma_{ADC}=\Gamma$) and unequal Lindblad coefficients ($\Gamma_{DC}=\Gamma$ and $\Gamma_{ADC}=5\Gamma$). The quantity depicted in the horizontal axis has units of $\Gamma^{-1}$. The vertical axis is dimensionless.}
\label{fig3} 
\end{figure}

We also apply the CQS on two pairs of channels, viz., ideal depolarizing and amplitude damping channels having the same Lindblad coefficients, $\Gamma_{DC} = \Gamma_{ADC} = \Gamma$, and the same pair of channels having unequal Lindblad coefficients, i.e., $\Gamma_{DC} = \Gamma$ and $\Gamma_{ADC} = 5\Gamma$. Since the Kraus operators of the depolarizing maps do not commute with the same of amplitude damping channels in both situations, the effective dynamics created by the CQS
will be CP-indivisible. We can verify the P-indivisibility of the dynamics, $\{\Sigma(\Lambda_{DC},\Lambda_{ADC},t,0,\cdot)\}_t$ in the same way as in the previous example.
In this regard, we consider the same initial states, $\rho_S^1=\ketbra{0}$ and $\rho_S^2=\ketbra{1}$. In Fig. \ref{fig3}, we illustrate the trace distances, $D(\rho_S^1(t),\rho_S^2(t))$, between the states $\rho_S^1(t)=\Sigma(\Lambda_{DC},\Lambda_{ADC},t,0,\rho_S^1)$ and $\rho_S^2(t)=\Sigma(\Lambda_{DC},\Lambda_{ADC},t,0,\rho_S^2)$ for the cases where the Lindblad coefficients have equal values (dash-dotted blue line) and unequal values (dashed pink line) with respect to $t$. The clear non-monotonic nature of the curves over $t$ for both cases confirms the following result:
\begin{example}
The dynamics produced by applying the CQS on two CP-divisible dynamics, viz., ideal amplitude damping and depolarizing channels having the same or different Lindblad coefficients, can be P-indivisible.
\end{example}

\begin{figure}[h!]
\includegraphics[scale=0.95]{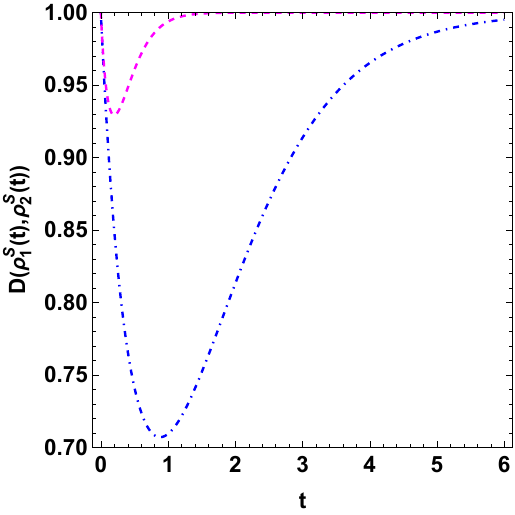}
\caption{\textbf{Illustration of the P-indivisibility characteristic of the dynamics built by applying conventional quantum switch on ideal depolarizing and phase damping channels.} All the considerations are the same as in Fig. \ref{fig3} except for the fact that here the CQS acts on ideal depolarizing and phase damping channels, and the Lindblad coefficients of the channels are $\Gamma_{DC}=\Gamma_{PDC}=1$ (shown using a blue dash-dotted line) and $\Gamma_{DC}=1$ and $\Gamma_{PDC}=5$ (depicted using a pink dashed curve).}
\label{fig4} 
\end{figure}

As the final example, in Fig.~\ref{fig4}, we plot the trace distance, $D(\rho_S^1(t),\rho_S^2(t))$, between $\rho_S^1(t)=\Sigma(\Lambda_{DC},\Lambda_{PDC},t,0,\rho_S^1)$ and $\rho_S^2(t)=\Sigma(\Lambda_{DC},\Lambda_{PDC},t,0,\rho_S^2)$, where $\rho_S^1$ and $\rho_S^2$ are the same initial states as considered in the previous two examples. The blue dash-dotted and pink dashed curves depict scenarios where the two channels, $\Lambda_{DC}$ and $\Lambda_{PDC}$, have equal Lindblad coefficients, $\Gamma_{DC}=\Gamma_{PDC}=\Gamma$ and unequal Lindblad coefficients, i.e., $\Gamma_{DC}=\Gamma$ and $\Gamma_{PDC}=5\Gamma$, respectively. From the non-monotonic behaviors of $D(\rho_S^1(t),\rho_S^2(t))$ with respect to $t$, which can be clearly witnessed in the two curves of Fig. \ref{fig4}, we make the following statement:
\begin{example}
The operation of the CQS on two CP-divisible dynamics, the ideal depolarizing and phase-damping channels having the same or different Lindblad coefficients, can create a P-indivisible dynamics.
\end{example}

\section{conclusion}
\label{sec6}
A quantum switch is a superoperator that creates a superposition of different causal orders of quantum channels using a control system. In quantum informational tasks,  dynamics generated by applying a quantum switch on two channels have been proven to provide  advantage over the utilization of the channels in definite causal order.

{Due to the inability to decompose the intermediate evolution of a system undergoing CP-indivisible dynamics into Kraus operators, the operation of a traditional quantum switch on CP-indivisible dynamics is infeasible.}
We provided the notion of a universal quantum switch that can be applied 
on all types of quantum dynamics. Our approach allowed for the construction of the indefinite causal order of any set of quantum dynamics without limiting it to CP-divisible dynamics.
We further presented a state discrimination task where the UQS was proven to perform better than the usual quantum switch in certain parameter regions. Using the model of the UQS, we showed an example where the indefinite causal order of two CP-indivisible channels is CP-indivisible. Moreover, we investigated whether the typical quantum switch can preserve CP-divisibility when superposing the different causal orders of CP-divisible dynamics. We analytically proved that the commutativity of the Kraus operators of one of the CP-divisible channels with the Kraus operators of the other CP-divisible channel for all time is a necessary and sufficient criterion for the channel generated by the action of the CQS on these channels to be CP-divisible. Finally, we found a few examples of CP-divisible dynamics transforming into P-indivisible dynamics when {acted upon} by the typical quantum switch.

\section*{Acknowledgment}
KS acknowledges support from the project MadQ-CM (Madrid Quantum de la Comunidad de Madrid) funded by the European Union (NextGenerationEU, PRTR-C17.I1) and by the Comunidad de Madrid (Programa de Acciones Complementarias).

\section*{Appendix}

\begin{appendix}
Let us assume that $\lbrace K_j^{(1)}(t_{f},t_{i}) \rbrace _j$ and $\lbrace K_j^{(2)}(t_{f},t_{i})  \rbrace_j$ are the Kraus operators of the two maps, $\Lambda_1^{CP}(t_{f},t_{i})$ and $\Lambda_1^{CP}(t_{f},t_{i})$, that evolves the system from any initial time, $t_{i}$, to a final time, $t_{f}$. Here we have considered the state of the control as $\omega_C = \ket{+}$.

\begin{widetext}
\begin{align*}
 &\Sigma(\Lambda_1^{CP}, \Lambda_2^{CP},t_2,t_1,\rho_S)\\  &= \frac{1}{4} \sum_{ij} \Bigg[ K_j^{(2)} (t_2,s) K_i^{(1)} (s,t_1) \rho_S \left[K_i^{(1)}(s,t_1)\right]^\dagger \left[K_j^{(2)} (t_2,s)\right]^\dagger + K_j^{(2)} (t_2,s) K_i^{(1)}(s,t_1) \rho_S \left[K_j^{(2)} (s,t_1)\right]^\dagger \left[K_i^{(1)} (t_2,s)\right]^\dagger \nonumber \\ &+ K_i^{(1)} (t_2,s) K_j^{(2)} (s,t_1) \rho_S \left[K_j^{(2)}(s,t_1)\right]^\dagger \left[K_i^{(1)} (t_2,s)\right]^\dagger + K_i^{(1)}(t_2,s) K_j^{(2)}(s,t_1) \rho_S \left[K_i^{(1)}(s,t_1)\right]^\dagger \left[K_j^{(2)}(t_2,s)\right]^\dagger\Bigg],\\ \nonumber
 \end{align*}
where $t_2 \geq s \geq t_1 \geq 0$. Then, we can write $\Sigma(\Lambda_1^{CP}, \Lambda_2^{CP},t_2,t_1,\Sigma(\Lambda_1^{CP}, \Lambda_2^{CP},t_1,0,\rho_S))$ as follows:

 \begin{align*}
 &\Sigma(\Lambda_1^{CP}, \Lambda_2^{CP},t_2,t_1,\Sigma(\Lambda_1^{CP}, \Lambda_2^{CP},t_1,0,\rho_S)) \\ &= \frac{1}{4} \sum_{ijmn} \Bigg[ K_n^{(2)}(t_2,s_2) K_m^{(1)}(s_2,t_1) K_j^{(2)}(t_1,s_1) K_i^{(1)}(s_1,0) \rho_S \left[K_i^{(1)}(s_1,0)\right]^\dagger \left[K_j^{(2)}(t_1,s_1)\right]^\dagger \left[K_m^{(1)}(s_2,t_1)\right]^\dagger \left[K_n^{(2)}(t_2,s_2)\right]^\dagger \nonumber \\&+
 K_n^{(2)}(t_2,s_2) K_m^{(1)}(s_2,t_1) K_j^{(2)}(t_1,s_1) K_i^{(1)}(s_1,0) \rho_S \left[K_j^{(2)}(s_1,0)\right]^\dagger \left[K_i^{(1)}(t_1,s_1)\right]^\dagger \left[K_n^{(2)}(s_2,t_1)\right]^\dagger \left[K_m^{(1)}(t_2,s_2)\right]^\dagger \nonumber \\ & + 
 K_m^{(1)}(t_2,s_2) K_n^{(2)}(s_2,t_1) K_i^{(1)}(t_1,s_1) K_j^{(2)}(s_1,0) \rho_S \left[K_j^{(2)}(s_1,0)\right]^\dagger \left[K_i^{(1)}(t_1,s_1)\right]^\dagger \left[K_n^{(2)}(s_2,t_1)\right]^\dagger \left[K_m^{(1)}(t_2,s_2)\right]^\dagger \nonumber \\ &+
 K_m^{(1)}(t_2,s_2) K_n^{(2)}(s_2,t_1) K_i^{(1)}(t_1,s_1) K_j^{(2)}(s_1,0) \rho_S \left[K_i^{(1)}(s_1,0)\right]^\dagger \left[K_j^{(2)}(t_1,s_1)\right]^\dagger \left[K_m^{(1)}(s_2,t_1)\right]^\dagger \left[K_n^{(2)}(t_2,s_2)\right]^\dagger \Bigg],
\end{align*}
\end{widetext}
for $t_2 \geq s_2 \geq t_1 \geq s_1 \geq 0$.

\end{appendix}

\bibliography{switch_qm}
\end{document}